\newcommand{\citep}[1]{\cite{#1}\xspace}
\newcommand{\Ptime}{{\sc P\xspace}}
\newcommand{\NP}{{\sc NP}\xspace}
\newcommand{\subss}{{\sc Subset Sum}\xspace}
\newcommand{\usubss}{{\sc Unbounded Subset Sum}\xspace}
\newcommand{\knapsack}{{\sc Knapsack}\xspace}
\newcommand{\uknapsack}{{\sc Unbounded Knapsack}\xspace}
\newcommand{\partition}{{\sc Partition}\xspace}
\newcommand{\Rat}{\ensuremath{\mathbb{Q}}\xspace}
\newcommand{\calO}{\mathcal{O}}
\newcommand{\nfrac}[2]{\frac{#1}{#2}}
\newcommand{\plone}[1]{\oplus_{#1} 1}
\newcommand{\minone}[1]{\ominus_{#1} 1}
\newcommand{\ocnum}{4\xspace}
\newcommand{\sameSAT}{{\sc All-the-Same-SAT}\xspace}
\newcommand{\XSAT}{{\sc One-in-Three-SAT}\xspace}
\newcommand{\SAT}{{\sc Satisfiability}\xspace}
\newcommand{\cnffour}{3-CNF$_{\leq \ocnum}$\xspace}
\newcommand{\val}{\mathsf{profit}}
\newcommand{\OPT}{\mathrm{OPT}}
\begin{document}
\title{On Strong NP-Completeness of Rational Problems}
\author{Dominik Wojtczak}
\institute{University of Liverpool, UK}
\maketitle
\begin{abstract}
The computational complexity of 
the partition, 0-1 subset sum, unbounded subset sum, 0-1 knapsack and unbounded knapsack problems 
and their multiple variants
were studied in numerous papers in the past 
where all the weights and profits were assumed to be integers.
We re-examine here the computational complexity of all these problems
in the setting where the weights and profits are allowed 
to be any rational numbers. 
We show that all of these problems in this setting become 
strongly NP-complete and, as a result,
no pseudo-polynomial algorithm can exist for solving them
unless \Ptime=\NP. Despite this result we show that 
they all still admit a fully polynomial-time approximation scheme.

\end{abstract}

\section{Introduction}

The problem of partitioning a given set of items into two parts with equal total weights (that we will refer to as \partition) 
goes back at least to 1897 \cite{mathews1897partition}.
A well-known generalisation is the problem of finding a subset with a given total weight (0-1 \subss) and
the same problem where each item can be picked more than once (\usubss).
Finally, these are commonly generalised to the setting where each item also has a profit and
the aim is to pick a subset of items with the total profit higher than a given threshold, 
but at the same time their total weight smaller than a given capacity (0-1 \knapsack).
A variant of the last problem where each item can be picked more than once is also studied (\uknapsack).

The \subss problem has numerous applications: its solutions can be used for designing better lower bounds for scheduling problems (see, e.g., \cite{hoogeveen1994new} and \cite{gueret1999new}) and it appears as a subproblem in numerous combinatorial problems (see, e.g., \cite{pisinger1999exact}). At the same time, many industrial problems can be formulated as knapsack problems: cargo loading, cutting stock, 
capital budgeting, portfolio selection, interbank clearing systems, knapsack cryptosystems, and combinatorial auctions 
to name a couple of examples (see Chapter 15 in \cite{books/daglib/0010031} for more details regarding these problems and their solutions).

The decision problems studied in this paper were among the first ones to be shown to be NP-complete \cite{karp1972reducibility}. 
At the same they are considered to be the easiest problems in this class, 
because they are polynomial time solvable if items' weights and profits are represented using the unary notation
(in other words, they are only {\em weakly} NP-complete). 
In particular, they can be solved in polynomial time when these numbers are bounded by a fixed constant
(and the number of items is unbounded).
Furthermore, the optimisation version of all these decision problems admit {\em fully polynomial-time approximation schemes (FPTAS)}, i.e.,
we can find a solution with a value at least equal to $(1 - \epsilon)$ times the optimal in time polynomial in the size of the input and ${1}/{\epsilon}$ for any $\epsilon > 0$.

To the best of our knowledge, the computational complexity analysis of all these problems was only studied so far 
under the simplifying assumption that all the input values are integers.
However, in most settings where these problems are used, these numbers are likely to be given as rational numbers instead.
We were surprised to discover that the computational complexity in such a rational setting was not properly studied before.
Indeed, as pointed out in \citep{books/daglib/0010031}:
\begin{quote}
	A rather subtle point is the question of rational coefficients. 
	Indeed, most textbooks get rid of this case, where some or all input values are non-integer, 
	by the trivial statement that multiplying with a suitable factor, 
	e.g. with the smallest common multiple of the denominators, 
	if the values are given as fractions or by a suitable power of 10, 
	transforms the data into integers.
	Clearly, this may transform even a problem of moderate size into a rather unpleasant problem with huge coefficients.
\end{quote}
This clearly looks like a fundamental gap in the understanding of the complexity of these computational problems.
Allowing the input values to be rational makes a lot of sense in many settings.
For example, we encountered this problem when studying an optimal control in multi-mode systems with discrete costs \cite{DBLP:conf/time/MousaSW16,FORMATS17/MSW17}
and looking at the weighted voting games (see, e.g., \cite{elkind2009computational} where the weights are defined to be rational).
An interesting real-life problem is checking whether we have the exact amount when paying, which is important in a situation when no change can be given.
While we take decimal monetary systems for granted these days, there were plenty of non-decimal monetary systems in use not so long ago.
For example, in the UK between 1717 and 1816 one pound sterling was worth twenty shillings, one shilling was worth twelve pence, and one guinea was worth twenty one shillings.

We show here that allowing the input numbers to be rational makes a significant difference and in fact all these decision problems in such a setting become strongly \NP-complete \cite{garey1978strong}, i.e., they are \NP-complete even when all their numerical values are at most polynomial
in the size of the rest of the input or, equivalently, if all these numerical values are represented in unary.
To prove this we will show an \NP-completeness of a new variant of the {\sc satisfiability} problem and
use results regarding distribution of prime numbers.
As a direct consequence of our result, there does not exist any pseudo-polynomial algorithms for solving these decision problems unless \Ptime=\NP.
At the same time, we will show that they still all admit a fully polynomial-time approximation scheme (see, e.g., \cite{books/daglib/0010031}).
This may seem wrong, because the paper that introduced strong \NP-completeness 
\cite{garey1978strong} also showed that
no strongly \NP-hard problem can admit an FPTAS unless \Ptime=\NP. 
However, the crucial assumption made there is that the objective function is integer valued,
which does not hold in our case.

\medskip\noindent{\bf Related work. }
The decision problems studied in this paper are so commonly used that they have already been thousands of papers published
about them and many of their variants, including multiple algorithms and heuristic for solving them precisely and approximately.
There are also two full-length books, \cite{silvano1990knapsack} and \cite{books/daglib/0010031}, solely dedicated to these problems. 

Several extensions of the classic knapsack problem were shown to be strongly NP-complete.
These include partially ordered knapsack \cite{johnson1983knapsacks} (where we need to pick a set of items closed under predecessor), graph partitioning \cite{johnson1983knapsacks} (where we need to partition a graph into $m$ disjoint subgraphs under cost constraints),
multiple knapsack problem \cite{chekuri2005polynomial}, knapsack problem with conflict graphs \cite{pferschy2009knapsack} (where we restrict which pairs of items can be picked together), and quadratic knapsack problem \cite{gallo1980quadratic,books/daglib/0010031} 
(where the profit of packing an item depends on how well it fits together with the other selected items).

The first FPTAS for the optimisation version of the \knapsack problem was established in 1975 by Ibarra and Kim \cite{ibarra1975fast} 
and independently by Babat \cite{babat1975linear}. Multiple other, more efficient, FPTAS for these problems followed (see, e.g., \cite{books/daglib/0010031}).

\medskip\noindent{\bf Plan of the paper.}
In the next section, we introduce all the used notation as well as formally define all the decision problems that we study in this paper.
In Section \ref{sec:prime}, we analyse the amount of space one needs to write down the first $n$ primes in the unary notation
as well as a unique representation theorem concerning sums of rational numbers.
In Section \ref{sec:sat}, we define a couple of new variants of the well-known satisfiability problem for Boolean formulae in 3-CNF form
and show them to be \NP-complete.
Our main result, concerning the strong NP-hardness of all the decision problems studied in this paper 
with rational inputs, can be found in Section \ref{sec:strong} 
and it builds on the results from Sections \ref{sec:prime} and \ref{sec:sat}.
We briefly discuss the existence of FPTAS for the optimisation version of our decision problems in Section \ref{sec:fptas}.
Finally, we conclude in Section~\ref{sec:conclusions}.

\section{Background}
\label{sec:background}

Let $\Rat_{\geq 0}$ be the set of non-negative rational numbers.
We assume that a non-negative rational number is represented as usual as
a pair of its numerator and denominator, 
both of which are natural numbers that do not have a common divisor greater than $1$. 
A unary representation of a rational number is simply a pair of its numerator and denominator represented in unary.
For two natural numbers $a$ and $n$, let $a \!\!\mod n \in \{0,\ldots,n-1\}$ 
denote the remainder of dividing $a$ by $n$.
For any two numbers $a,b \in \{1,\ldots,n\}$, 
we define their addition modulo $n$, denoted by $\oplus_{n}$, as follows:
$a \oplus_{n} b = ((a + b - 1) \!\!\mod n) + 1$.
Note that we subtract and add 1 in this expression so that
the result of this operation belongs to $\{1,\ldots,n\}$.
Similarly we define the subtraction modulo $n$, denoted by $\ominus_{n}$, as follows:
$a \ominus_{n} b = ((n + a - b - 1) \!\!\mod n) + 1$.
We assume that $\oplus_{n}$ and $\ominus_{n}$ operators have higher precedence
than the usual $+$ and $-$ operators.

We now formally define all the decision problems that we study in this paper.

\begin{definition}[\subss problems]
	Assume we are given a list of $n$ items with rational non-negative weights $A = \{w_1, \ldots, w_n\}$ 
	and a target total weight $W \in \Rat_{\geq 0}$. 
	
	0-1 \subss: 
	Does there exists a subset $B$ of $A$ such that 
	the total weight of $B$ is equal to $W$?
	
	\usubss:
	Does there exist a list of non-negative integer quantities 
	$(q_1, \ldots, q_n)$ such that 
	$$\sum_{i=1}^n q_i \cdot w_i = W ? $$
	(Intuitively, $q_i$ denotes the number of times the $i$-th item in $A$ is chosen.)
\end{definition}

A natural generalisation of this problem where each item gives us a profit when picked
is the well-known knapsack problem.

\begin{definition}[\knapsack problems]
	Assume there are $n$ items whose non-negative rational weights and profits are given
	as a list $L = \{(w_1,v_1), \ldots, (w_n,v_n)\}$.
	Let the capacity be $W \in \Rat_{\geq 0}$ and the profit threshold be $V \in \Rat_{\geq 0}$.
	
	0-1 \knapsack: 
	Is there a subset of $L$ whose
	total weight does not exceed $W$
	and total profit is at least $V$?	
	
	\uknapsack:
	Is there a list of non-negative integers
	$(q_1, \ldots, q_n)$ such that 
	$$\sum_{i=1}^n q_i \cdot w_i \leq W \text{\qquad and \qquad} \sum_{i=1}^n q_i \cdot v_i \geq V?$$ 
	(Intuitively, $q_i$ denotes the number of times the $i$-th item in $A$ is chosen.)
	
\end{definition}

Finally, a special case of the \subss problem is the \partition problem.

\begin{definition}[\partition problem]
	Assume we are given a list of $n$ items with non-negative rational weights $A = \{w_1, \ldots, w_n\}$.
	
	Can the set $A$ be partitioned into two sets with equal total weights?
	
\end{definition}

Now let us compare the size of a \partition problem instance when represented in binary and unary notation.
Let $A = \{w_1, \ldots, w_n\}$ be an instance such that $w_i = \frac{a_i}{b_i}$ where $a_i, b_i \in \mathbb{N}$ for all $i=1,\ldots,n$.
Notice that the size of $A$ is $\Theta\big(\sum_{i=1}^n \log(a_i) + \sum_{i=1}^n\log(b_i)\big)$ when written down in binary 
and $\Theta\big(\sum_{i=1}^n a_i + \sum_{i=1}^n b_i\big)$ in unary. 
If we now multiply all weights in $A$ by $\prod_{i=1}^n b_i$ then we would 
get an equivalent instance $A'$ with only integer weights. 
The size of $A'$ would be 
$\Theta\big(\sum_{i=1}^n \log(a_i/b_i\prod_{j=1}^n b_j)\big) = \Theta\big(\sum_{i=1}^n \log a_i + (n-1)\sum_{i=1}^n \log b_i\big)$ 
when written down in binary and in unary: $\Theta\big(\sum_{i=1}^n (a_i/b_i\prod_{j=1}^n b_j)\big) = \Omega(\min a_i \cdot (\min b_i)^{n-1})$. Notice that the first expression is polynomial in the size of the original instance while the second one
may grow exponentially. A similar analysis shows the same behaviour for all the other decision problems studied in this paper.

\section{Prime Suspects}
\label{sec:prime}

In this section we first show that writing down all the first $n$ prime numbers in the unary notation can be done
using space polynomial in $n$. Let $\pi_i$ denote the $i$-th prime number. The following upper bound is known for $\pi_i$.

\begin{theorem}[inequality (3.13) in \cite{rosser1962approximate}]
	\label{thm:prime-dist}
	$$\pi_i < i(\log i + \log \log i) \text{\ \ \ \ for\ \ \ \ } i \geq 6$$
\end{theorem}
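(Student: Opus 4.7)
The plan is to combine the Prime Number Theorem (in an explicit, effective form) with a finite verification for small $i$. The Prime Number Theorem, in its refined form, yields
\[
\pi(x) = \frac{x}{\log x} + \frac{x}{\log^2 x} + O\!\left(\frac{x}{\log^3 x}\right),
\]
and inverting the relation $\pi(\pi_n) = n$, for instance via Abel summation or by iterative substitution into $x = n \log x$, produces the two-term asymptotic
\[
\pi_n = n\log n + n\log\log n - n + O\!\left(\frac{n \log\log n}{\log n}\right).
\]
Since the $-n$ term dominates the $O$-term for sufficiently large $n$, this immediately gives $\pi_n < n(\log n + \log\log n)$ once $n \geq N_0$ for some computable threshold $N_0$.

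The main obstacle is making the error term completely explicit so that $N_0$ is small enough to allow brute-force verification on the remaining range. For that I would appeal to the Rosser–Schoenfeld style explicit bounds on the Chebyshev function $\theta(x) = \sum_{p \leq x} \log p$, namely inequalities of the form $|\theta(x) - x| \leq c\, x/\log^k x$ for explicit constants $c,k$ valid for $x \geq x_0$. Translating these via Abel summation yields explicit two-sided estimates on $\pi(x)$, and inverting the resulting inequality gives a concrete $N_0$.

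The remaining task is to verify $\pi_i < i(\log i + \log\log i)$ for each $6 \leq i < N_0$. This is a finite check against a table of primes: for example at $i = 6$ one has $\pi_6 = 13$ while $6(\log 6 + \log\log 6) \approx 6(1.79 + 0.58) \approx 14.2$, so the bound holds with a small margin, and the margin only grows. The hard part of the entire argument is therefore purely analytic: obtaining the explicit Chebyshev-function bound, which ultimately rests on nontrivial numerical information about the zeros of the Riemann zeta function in a bounded strip. Everything outside that input is routine algebra, asymptotic inversion, and a finite computation.
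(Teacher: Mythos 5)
The paper does not prove this statement at all: it is imported verbatim as inequality (3.13) of Rosser and Schoenfeld's \emph{Approximate formulas for some functions of prime numbers}, and the paper's ``proof'' is the citation. Your outline is, in spirit, a faithful summary of how that reference actually establishes the bound --- explicit two-sided estimates on the Chebyshev functions derived from numerically verified information about the zeros of $\zeta$, translated into explicit bounds on $\pi(x)$, inverted to bound $p_n$, with the remaining initial segment checked directly. So you have correctly identified the architecture of the real proof.

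As a standalone argument, however, what you have written is a plan rather than a proof, and the gap is exactly where you locate it: no explicit constants $c, k, x_0$ are produced, no concrete threshold $N_0$ is derived, and the finite verification over $6 \leq i < N_0$ is not carried out. Two smaller points. First, your claim that after $i=6$ ``the margin only grows'' is not literally true: the quantity $i(\log i + \log\log i) - \pi_i$ fluctuates (e.g.\ it is about $3.86$ at $i=9$ but drops to about $2.37$ at $i=10$, since $p_{10}=29$), so the finite range genuinely must be checked prime by prime rather than dismissed by monotonicity. Second, the two-term asymptotic $\pi_n = n(\log n + \log\log n - 1 + o(1))$ does guarantee the inequality for large $n$, but the entire difficulty of the theorem is that ``large'' must be made small enough to meet the computation coming up from $i=6$; that is the content of the Rosser--Schoenfeld explicit estimates, and it cannot be treated as routine. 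For the purposes of this paper, citing the result, as the author does, is the appropriate move.
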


This estimate gives us the following corollary that will be used in the main result of this paper.

\begin{corollary}
	\label{cor:primes-complexity}
	The total size of the first $n$ prime numbers, when written down in unary, is $\calO(n^2\log n)$.
	Furthermore, they can be computed in polynomial time.
\end{corollary}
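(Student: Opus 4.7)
The plan is to use Theorem~\ref{thm:prime-dist} directly to control the sum of the first $n$ primes, and then invoke a standard sieve to handle the computational claim.

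First, I would observe that writing $\pi_i$ in unary takes exactly $\pi_i$ symbols (plus a constant separator), so the total unary size of the first $n$ primes is $\Theta\bigl(\sum_{i=1}^n \pi_i\bigr)$. For $i \geq 6$, Theorem~\ref{thm:prime-dist} gives $\pi_i < i(\log i + \log\log i) \leq 2i\log i$ (for $i$ large enough, and the small cases $i < 6$ contribute only a constant). Summing yields
\[
\sum_{i=1}^n \pi_i \;\leq\; C + \sum_{i=6}^n 2 i \log i \;\leq\; 2\log n \sum_{i=1}^n i \;=\; n(n+1)\log n \;=\; \calO(n^2 \log n),
\]
which establishes the space bound.

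For the computational claim, note that by the same bound every one of the first $n$ primes lies below $N := \lceil n(\log n + \log\log n)\rceil + c$ for a small constant $c$ handling $i < 6$. Running the sieve of Eratosthenes on $\{2,\ldots,N\}$ identifies all of these primes in time $\calO(N \log\log N) = \calO(n \log n \cdot \log\log n)$, and writing them down in unary takes a further $\calO(n^2 \log n)$ time. Both quantities are polynomial in $n$, so the whole procedure runs in time polynomial in $n$ (and hence polynomial in the size of its own output).

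There is no real obstacle here: the only points to be mildly careful about are (i) that Theorem~\ref{thm:prime-dist} applies only for $i \geq 6$, which is absorbed into the additive constant, and (ii) clarifying that ``polynomial time'' is measured in $n$, which is justified because the output itself has size $\Omega(n)$ in unary.
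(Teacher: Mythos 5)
Your proposal is correct and follows essentially the same route as the paper: bound each of the first $n$ primes by $\calO(n\log n)$ via Theorem~\ref{thm:prime-dist} and apply the sieve of Eratosthenes up to that bound; your summation $\sum_i \pi_i$ is just a slightly more explicit version of the paper's $n \cdot \calO(n\log n)$ estimate.
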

\begin{proof}
	Let $n \geq 6$, because otherwise the problem is trivial.
	Thanks to Theorem \ref{thm:prime-dist}, it suffices to list all natural numbers smaller than $2n\log n$
	(because $n(\log n + \log \log n) \leq 2n\log n$)
	and use the sieve of Eratosthenes to remove all nonprime numbers from this list.
	It follows that writing down the first $n$ prime numbers requires $\calO(n^2\log n)$ space.
	The sieve can easily be implemented in polynomial time and, to be precise,
	in this case $\calO(n^2\log^2 n)$ additions and $\calO(n\log n)$ bits of memory would suffice.
\end{proof}

Now we prove a result regarding a unique representation of rational numbers expressed as sums of fractions with prime denominators, which in a way is quite similar to the Chinese remainder theorem.

\begin{lemma}
	\label{lem:uniqueness}
	Let $(p_1, \ldots, p_n)$ be a list of $n$ different prime numbers.
	Let	$(a_0,a_1, \ldots, a_n)$ and $(a_0,b_1,\ldots,b_n)$ 
	be two lists of integers such that
	$|a_i - b_i| < p_i$ holds for all $i = 1,\ldots,n$. 
	We then have
	$$a_0 + \frac{a_1}{p_1} + \ldots + \frac{a_n}{p_n} = b_0 + \frac{b_1}{p_1} + \ldots + \frac{b_n}{p_n}$$
	$$\text{ if and only if }$$ 
	$$a_i = b_i \text{\ \ for all\ \ } i = 0,\ldots,n.$$
\end{lemma}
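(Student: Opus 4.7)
The plan is to reduce the statement to showing that the only integer solution $(c_0, c_1, \ldots, c_n)$ of
\[
c_0 + \frac{c_1}{p_1} + \cdots + \frac{c_n}{p_n} = 0, \qquad |c_i| < p_i \text{ for } i \geq 1,
\]
is the all-zero tuple, and then applying this with $c_i = a_i - b_i$. The backward direction of the lemma is immediate, so all the work is in the forward direction.

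Assuming the displayed equality, I would introduce $c_i := a_i - b_i$ and obtain the reduced equation above. The key step is to clear denominators by multiplying through by $P := \prod_{i=1}^n p_i$, giving
\[
c_0 \, P + \sum_{i=1}^n c_i \cdot \frac{P}{p_i} = 0.
\]
Now I would fix an arbitrary $j \in \{1, \ldots, n\}$ and reduce the equation modulo $p_j$. The term $c_0 P$ is divisible by $p_j$ because $p_j \mid P$, and each term $c_i \cdot P/p_i$ with $i \neq j$ is also divisible by $p_j$ because $P/p_i$ still contains the factor $p_j$. What remains is $c_j \cdot (P/p_j) \equiv 0 \pmod{p_j}$. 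Since the $p_i$ are distinct primes, $P/p_j = \prod_{i \neq j} p_i$ is coprime to $p_j$, and therefore $c_j \equiv 0 \pmod{p_j}$. Combined with the hypothesis $|c_j| < p_j$, this forces $c_j = 0$.

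Applying this argument for every $j = 1, \ldots, n$ eliminates all the fractional terms, leaving $c_0 = 0$ as well. Translating back, $a_i = b_i$ for all $i$, which is the desired conclusion. I do not anticipate a genuine obstacle: the argument is a direct prime-by-prime divisibility analysis in the spirit of the Chinese remainder theorem, and the bound $|a_i - b_i| < p_i$ is exactly what is needed to convert a congruence modulo $p_i$ into an equality.
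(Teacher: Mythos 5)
Your proposal is correct and uses essentially the same argument as the paper: clear denominators by multiplying by $\prod_{i=1}^n p_i$ and, for each $j$, use the fact that every term except the $j$-th is divisible by $p_j$ while $\prod_{i\neq j} p_i$ is coprime to $p_j$, so the bound $|a_j - b_j| < p_j$ forces $a_j = b_j$. Your write-up is in fact slightly cleaner, deriving $c_j \equiv 0 \pmod{p_j}$ directly rather than via the paper's case split on whether $a_0 = b_0$ and its argument by contradiction.
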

\begin{proof}
	$(\Leftarrow)$
	If $a_i = b_i \text{ for all } i = 0,\ldots,n$ holds then obviously
	$$a_0 + \frac{a_1}{p_1} + \ldots + \frac{a_n}{p_n} = b_0 + \frac{b_1}{p_1} + \ldots + \frac{b_n}{p_n}.$$
	
	$(\Rightarrow)$
	We need to consider two cases: $a_0 = b_0$ and $a_0 \neq b_0$. 
	In the first case, suppose that
	$a_j \neq b_j$ for some $j \in \{1,\ldots,n\}$. If we multiply
	$$\frac{a_1-b_1}{p_1} + \ldots + \frac{a_n-b_n}{p_n}
	\text{\ \ \ by\ \ \ } \prod_{i=1}^{n} p_i$$ then we would get an integer, which is not divisible by $p_j$ 
	(because $0 < |a_j-b_j| < p_j$)
	and so this expression cannot be equal to $0$.
	Therefore, in this case, 
	$$a_0 + \frac{a_1}{p_1} + \ldots + \frac{a_n}{p_n} \neq b_0 + \frac{b_1}{p_1} + \ldots + \frac{b_n}{p_n}.$$
	In the second case, if $a_i = b_i \text{ for all } i = 1,\ldots,n$ holds
	then clearly 
	$$a_0 + \frac{a_1}{p_1} + \ldots + \frac{a_n}{p_n} \neq b_0 + \frac{b_1}{p_1} + \ldots + \frac{b_n}{p_n}.$$
	Otherwise, again suppose that
	$a_j \neq b_j$ for some $j \in \{1,\ldots,n\}$.
	If we multiply
	$$a_0 - b_0 + \frac{a_1-b_1}{p_1} + \ldots + \frac{a_n-b_n}{p_n}
	\text{\ \ \ by\ \ \ } \prod_{i=1}^{n} p_i$$ then we would get an integer, which is not divisible by $p_j$ 
	(because $0 < |a_j-b_j| < p_j$)
	and so this expression cannot be equal to $0$.
	Therefore, again, in this case, 
	$$a_0 + \frac{a_1}{p_1} + \ldots + \frac{a_n}{p_n} \neq b_0 + \frac{b_1}{p_1} + \ldots + \frac{b_n}{p_n}.$$
\end{proof}

\section{In the Pursuit of Satisfaction}
\label{sec:sat}

The Boolean satisfiability (\SAT) problem for formulae 
was the first problem to be shown \NP-complete by Cook \cite{cook1971complexity}
and Levin \cite{levin1973universal}.
Karp \citep{karp1972reducibility} showed that \SAT is also NP-complete for
formulae in the conjunctive normal form where each clause has at most three literals.
Of course, the same holds for formulae with exactly three literals in each clause.
This is simply because we can introduce a new fresh variable for every missing literal in each clause 
of the given formula without changing its satisfiability.
The set of all formulae with exactly three literals in each clause will denoted by {\em 3-CNF}.
Tovey \citep{tovey1984simplified} showed that \SAT is also NP-complete for 
3-CNF formulae in which each variable occurs at most $4$ times.
We will denote the set of all such formulae by \cnffour.
Schaefer defined in \citep{schaefer1978complexity} the \XSAT problem for 3-CNF formulae 
in which one asks for an truth assignment that makes exactly one literal in each clause true,
and showed it to be \NP-complete.
We define here a new \sameSAT problem for 3-CNF formulae, which 
asks for a valuation that makes exactly the same number of literals
true in every clause (this may be zero, i.e., such a valuation may not make the formula true).
This problem will be a crucial ingredient in the proof of the main result of this paper.

The first step is to show that \XSAT problem is \NP-complete even when restricted to \cnffour formulae.

\begin{theorem}
	\label{thm:xsat}
	The \XSAT problem for \cnffour is \NP-complete.
\end{theorem}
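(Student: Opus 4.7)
The plan is to reduce from the unrestricted \XSAT problem for $3$-CNF formulae, which is \NP-complete by Schaefer's theorem. Given such an instance $\phi$, I will replace each variable that occurs too often by a collection of fresh copies, one per occurrence, and glue these copies together with small gadgets that enforce equality of the copies under the XSAT (exactly-one-true-literal-per-clause) semantics. Membership in \NP is immediate, so only the reduction requires work.

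The gadget will be a directed cycle of one-in-three clauses. Suppose the variable $x$ occurs $k$ times in $\phi$, and let $x_1,\ldots,x_k$ be fresh copies, with the $j$-th occurrence of $x$ replaced by $x_j$. For each $j \in \{1,\ldots,k\}$ I introduce a further fresh auxiliary $z_j$ and the clause
\[ C_j \;=\; (x_j \vee \neg x_{j \oplus_k 1} \vee z_j). \]
A short case check shows that, under the XSAT semantics, $C_j$ forbids exactly the pattern $x_j=1,\ x_{j\oplus_k 1}=0$ while pinning $z_j$ to the remaining degree of freedom. Chaining the $C_j$'s around the cycle rules out every $1 \to 0$ transition among the $x_j$'s, so the only XSAT-satisfying assignments of the cycle are the two uniform ones with $x_1 = \ldots = x_k$ and every $z_j = 0$.

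Equivalence with $\phi$ under XSAT then follows immediately: a satisfying assignment of $\phi$ lifts by copying $x$'s value to each $x_j$ and setting every $z_j$ to $0$, and any XSAT-satisfying assignment of the transformed formula projects back to $\phi$ since the forced equality of the copies makes the body identical to $\phi$ up to renaming. Counting occurrences, each $x_j$ appears once in the body and twice in the gadget (positively in $C_j$, negatively in $C_{j \ominus_k 1}$), for a total of three, whereas each $z_j$ appears only once; hence every variable of the transformed formula occurs at most four times and the output lies in \cnffour. The transformation is clearly polynomial-time.

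The step I expect to be the main obstacle is the design of the gadget itself. Tovey's classical implication cycle $(\neg x_j \vee x_{j \oplus_k 1})$ is not directly usable, because the much tighter one-in-three semantics makes ordinary implication inexpressible by a single clause. The three-literal cyclic variant above is the natural XSAT analogue, but one has to be careful to pad each clause up to three literals with a fresh $z_j$ that is automatically pinned to $0$ by the one-in-three constraint rather than introducing extra obligations or degrees of freedom, and to verify by a thorough case analysis that the only XSAT-satisfying assignments of the gadget are indeed the two uniform ones.
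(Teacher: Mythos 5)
Your proof is correct, but it takes a genuinely different route from the paper's. The paper starts from \SAT restricted to \cnffour (Tovey's result) and applies Schaefer's clause gadget --- replacing each clause $x \vee y \vee z$ by $(\lnot x \vee a \vee b) \wedge (b \vee y \vee c) \wedge (c \vee d \vee \lnot z)$ with four fresh variables --- observing that this transformation converts satisfiability into the one-in-three condition while leaving the occurrence count of every original variable unchanged and using each fresh variable at most twice. You instead start from unrestricted \XSAT (Schaefer) and do the occurrence-bounding yourself, via a cyclic variable-splitting gadget adapted to the one-in-three semantics; your case analysis of $(x_j \vee \neg x_{j \oplus_k 1} \vee z_j)$ is right (it forbids precisely the $1 \to 0$ transition and pins $z_j$ in the other three cases), and since any non-constant assignment around a cycle must contain a $1 \to 0$ transition, the copies are forced equal. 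The paper's route is shorter because both of its ingredients are off-the-shelf and the composition is trivially occurrence-preserving; your route requires inventing and verifying a new gadget, but it buys a slightly stronger conclusion --- every split variable occurs at most $3$ times in your output, so you in fact establish \NP-completeness of \XSAT for 3-CNF formulae with at most three occurrences per variable (for the unsplit variables you still only get the bound $4$, but splitting every variable would give $3$ uniformly). One small presentational point: you should state explicitly that variables already occurring at most four times are left untouched, and note that the gadget degenerates harmlessly (or can be omitted) when $k = 1$.
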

\begin{proof}
	Obviously the problem is in \NP, because we can simply guess a valuation 
	and check how many literals are true in each clause in linear time.
	
	To prove \NP-hardness, we are going to reduce from the \SAT problem for \cnffour, 
	which is NP-complete \citep{tovey1984simplified}.
	Assume we are given a \cnffour formula
	\[\phi = C_1 \wedge C_2 \wedge \ldots \wedge C_m\]
	with $m$ clauses $C_1, \ldots, C_m$ and $n$ propositional variables $v_1, \ldots, v_n$,
	where $C_j = x_j \vee y_j \vee z_j$ for $j = 1,\ldots,m$ and
	each $x_j,y_j,z_j$ is a literal equal to $v_i$ or $\lnot v_i$
	for some $i$.
	We will construct a \cnffour formula $\phi'$ with $3m$ clauses and $n + 4m$ propositional variables 
	such that $\phi$ is satisfiable iff $\phi'$ is an instance of the \XSAT problem.
	This will be based on the construction already given in \cite{schaefer1978complexity}.
	
	The formula $\phi'$ is constructed by replacing each clause in $\phi$ with three new clauses.
	Specifically, the $j$-th clause $C_j = x_j \vee y_j \vee z_j$ is replaced by
	$C'_j := (\lnot x_j \vee a_j \vee b_j) \wedge (b_j \vee y_j \vee c_j) \wedge (c_j \vee d_j \vee \lnot z_j)$
	where $a_j, b_j, c_j, d_j$ are four fresh propositional variables.
	It is quite straightforward to check that only a valuation that makes $C_j$ true 
	can be extended to a valuation that makes exactly one literal true in each of the clauses in $C'_j$.
	Notice that such a constructed $\phi'$ is a \cnffour formula, because this transformation does not 
	increase the number of occurrences of any of the original variables in $\phi$
	and each of the new variables is used at most twice.
	
	Now, if there exists a valuation that makes every clause in $\phi$ true, then as argued above it can be extended to a valuation 
    that makes exactly one literal true in every clause in $\phi'$.
	
	To show the other direction, let $\nu$ be a valuation that makes exactly one literal true in every clause in $\phi'$.
	Consider for every $j = 1,\ldots,m$ the projection of $\nu$ on the set of variables
	occurring in the clause $C_j$. Suppose that such a valuation makes $C_j$ false.
    It follows that it would not be possible to extend this valuation 
    to a valuation that makes exactly one literal true in every clause in $C'_j$. 
    However, we already know that $\nu$ is such a valuation, so this leads to a contradiction.
\end{proof}

Although Theorem \ref{thm:xsat} is of independent interest, all that we need it for is to prove our next theorem. 

\begin{theorem}
    \label{thm:all-same}
	The \sameSAT problem for \cnffour\ \ formulae is \NP-complete.
\end{theorem}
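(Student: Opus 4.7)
The plan is to reduce from the \XSAT problem for \cnffour, which is \NP-complete by Theorem \ref{thm:xsat}. Membership of \sameSAT in \NP is immediate: given a guessed valuation, compute the number of satisfied literals in each clause and check that they all coincide, in linear time. Only \NP-hardness requires work.

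The central observation I would lean on is a \emph{negation symmetry}: for any valuation $\nu$, writing $\overline{\nu}$ for its pointwise complement, for every clause $C$ with three literals the number of literals of $C$ satisfied by $\overline{\nu}$ equals three minus the number satisfied by $\nu$. Consequently, a valuation giving every clause the common count $k$ exists iff one giving every clause the common count $3-k$ exists, so \sameSAT on the constructed formula $\phi'$ reduces to deciding whether a common-count-$0$ or a common-count-$1$ witness exists; only the latter coincides with the \XSAT condition on $\phi'$.

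Given a \cnffour \XSAT instance $\phi$, the reduction sets $\phi' := \phi \wedge (v \vee \lnot v \vee w)$ with two fresh variables $v, w$. The new clause always has exactly one satisfied literal among $\{v, \lnot v\}$ and then $w$ contributes $0$ or $1$, so its count is always in $\{1,2\}$; this rules out common-count-$0$ for $\phi'$ (and, by the symmetry above, common-count-$3$ as well). Any valuation of $\phi'$ with common count $1$ restricts to an \XSAT witness of $\phi$, and conversely any \XSAT witness of $\phi$ extends to a common-count-$1$ witness of $\phi'$ by setting $w$ to false and $v$ arbitrarily. Hence \sameSAT$(\phi')$ holds iff \XSAT$(\phi)$ holds, and since the occurrence counts of the original variables are unchanged while $v, w$ each appear at most twice, $\phi'$ remains in \cnffour.

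The only subtle point I anticipate is whether a $3$-CNF clause is allowed to contain both a variable and its negation. If such tautological clauses are disallowed, the gadget can be replaced by a two-clause block on three fresh variables, for instance $(v_1 \vee v_2 \vee v_3) \wedge (\lnot v_1 \vee \lnot v_2 \vee v_3)$; a direct case analysis shows that common-count-$0$ and common-count-$3$ witnesses for this block are impossible while common-count-$1$ witnesses exist, so the negation-symmetry argument carries over verbatim and each new variable still occurs at most twice.
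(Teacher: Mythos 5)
Your proof is correct and follows essentially the same route as the paper: a reduction from \XSAT for \cnffour via a gadget clause that forces the common count into $\{1,2\}$, combined with the complementation symmetry to collapse the count-$2$ case to count-$1$. The paper uses the single-fresh-variable clause $(x \vee x \vee \lnot x)$ where you use $(v \vee \lnot v \vee w)$, but this is an immaterial difference, and your fallback two-clause block is a nice hedge against a stricter reading of 3-CNF.
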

\begin{proof}
	The \sameSAT problem is clearly in \NP, because we can simply guess
	a valuation and check whether it makes exactly the same number of literals
	in every clause true.
	
	To proof \NP-hardness, we reduce from the \XSAT problem for \cnffour formulae (Theorem \ref{thm:xsat}). 
	Let $\phi$ be any \cnffour formula and let us consider a new formula 
	$\phi' = \phi \wedge (x \vee x \vee \lnot x)$,
	where $x$ is a fresh variable that does not occur in $\phi$.
	Notice that $\phi'$ is also a \cnffour formula.
	We claim that $\phi$ is an instance of \XSAT iff $\phi'$ is an instance of \sameSAT.
	
	\smallskip \noindent ($\Rightarrow$) 
	If $\nu$ is a valuation that makes exactly one literal in every clause in $\phi$ true, 
	then extending it by setting $\nu'(x) = \bot$ would
	make exactly one literal in every clause in $\phi'$ true.
	
	\smallskip \noindent ($\Leftarrow$) 
	Let $\nu$ be a valuation that makes the same number of literals in every clause in $\phi'$ true.
	This number cannot possibly be $0$ or $3$, 
	because there is at least one true literal and one false literal in 
	the clause $(x \vee x \vee \lnot x)$.
	
	If $\nu$ makes exactly one literal in every clause in $\phi'$ true, then 
	the same holds for $\phi$.
	
	If $\nu$ makes exactly two literals in every clause true, then 
	consider the valuation $\nu'$ such that $\nu'(y) = \lnot \nu(y)$ for every propositional variable $y$ in $\phi'$.
	Notice that $\nu'$ makes exactly one literal in every clause in $\phi'$ true,
	so the same holds for $\phi$. 	
\end{proof}

\section{Being Rational Makes You Stronger}
\label{sec:strong}

In this section, we build on the results obtained in the previous two section and show
strong NP-hardness of all the decision problems defined in Section \ref{sec:background}.
As a direct consequence, no pseudo-polynomial algorithm can exist for solving any of these
problems unless \Ptime=\NP.
Instead of showing strong NP-hardness for each of these problems separably, 
we will show one ``master'' reduction for the \usubss problem instead.
This reduction will then be reused to show strong NP-hardness of the \partition problem, 
and from these two results the strong NP-hardness of all the other decision problems studied in this paper will follow.

\begin{theorem}
    \label{thm:usbss}
	The \usubss problem with rational weights is strongly NP-complete.
\end{theorem}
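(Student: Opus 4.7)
The plan is to reduce from \sameSAT for \cnffour (Theorem~\ref{thm:all-same}) to \usubss, after first noting that \usubss with rational weights lies in \NP under the unary encoding — a routine argument bounding any candidate's multiplicities by $W$ divided by the minimum item weight. For the reduction, I would use Corollary~\ref{cor:primes-complexity} to pick $n$ distinct primes $q_1,\dots,q_n$ for the variables and $m$ distinct primes $p_1,\dots,p_m$ for the clauses, each larger than the polynomial multiplicity bound identified below. For each variable $v_i$ of $\phi$ introduce items
\[
t_i = 1 + \tfrac{1}{q_i} + \sum_{j:\,v_i \in C_j}\tfrac{1}{p_j}, \qquad
f_i = 1 + \tfrac{1}{q_i} + \sum_{j:\,\lnot v_i \in C_j}\tfrac{1}{p_j},
\]
together with ``completion'' items $h_i = (q_i-1)/q_i$ per variable and $g_j = (p_j-1)/p_j$ per clause, and set the target $W = 2n + km$ for each choice of uniform count $k \in \{0,1,2,3\}$, running the four resulting \usubss instances in parallel and returning YES iff any of them does. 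Each weight is a rational with polynomial numerator and denominator (using that every variable appears in at most four clauses, so every denominator is a product of at most five of the chosen primes), the target is an integer of polynomial size, and the overall instance has polynomial unary size.

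The forward direction is immediate: given a \sameSAT valuation with uniform count $k$, the natural picks (one of $\{t_i,f_i\}$ and $h_i$ once per variable, plus $g_j$ picked exactly $k$ times per clause) sum to $W$. For the converse, expand any solution's canonical decomposition at each prime denominator; because the primes exceed the polynomial multiplicity bound derived from $W$ and the minimum item weight, every relevant coefficient lies in the range required by Lemma~\ref{lem:uniqueness}. Equating with the integer-valued target $W$ forces every fractional coefficient to vanish, producing modular identities that bounded multiplicities promote to exact equalities $q_i^h = q_i^t + q_i^f$ and $q_j^g = c_j^{\mathrm{nat}}$, where $c_j^{\mathrm{nat}}$ is the weighted count of ``true literal picks'' in clause $j$. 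Combined with the integer-part equation these pin down $q_i^t + q_i^f = 1$ for every variable and a uniform $c_j^{\mathrm{nat}} = k$ across clauses — exactly a \sameSAT valuation of $\phi$.

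The main obstacle is instance sizing. The ``natural'' target $\sum_i 1/q_i + k\sum_j 1/p_j$ has denominator $\prod q_i \prod p_j$, super-polynomial in unary, which on its own would disqualify the reduction from establishing strong \NP-hardness. The completion items $h_i,g_j$ are introduced precisely to absorb those fractional parts into a polynomial integer target, at the cost of admitting potentially spurious solutions in which some variable's pair is picked with multiplicity other than one. Ruling these out relies on choosing the primes large enough (via Corollary~\ref{cor:primes-complexity}) that Lemma~\ref{lem:uniqueness} returns per-coefficient equality rather than merely modular congruence; if the direct equation still leaves slack, an additional per-variable accounting prime with its own completion item tightens the constraint individually rather than only in aggregate.
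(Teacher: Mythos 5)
Your reduction has a genuine gap, and it is exactly the one you flag at the end without resolving: the completion items $h_i$ and $g_j$ admit spurious solutions that make the converse direction fail. In \usubss\ every item, including every completion item, may be taken with arbitrary multiplicity, so Lemma~\ref{lem:uniqueness} only yields the \emph{aggregate} constraints $q_i^h=q_i^t+q_i^f$, $q_j^g=c_j$, and one global integer equation; nothing couples the multiplicities of different variables, so nothing forces $q_i^t+q_i^f=1$ individually. Concretely, take $\phi=(x_1\vee x_2\vee x_3)\wedge(\lnot x_1\vee\lnot x_2\vee\lnot x_3)$, a NO instance of \sameSAT (a valuation makes $j$ literals true in the first clause and $3-j$ in the second). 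In your $k=0$ instance with $W=2n=6$, picking $t_1$, $h_1$ and $g_1$ each \emph{twice} gives $2(1+\tfrac1{q_1}+\tfrac1{p_1})+2(1-\tfrac1{q_1})+2(1-\tfrac1{p_1})=6$, so your algorithm answers YES on a NO instance. The suggested patch (``an additional per-variable accounting prime with its own completion item'') does not help: any new completion item is again freely multipliable and only produces one more equation of the same aggregate form.

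The paper avoids completion items altogether. The item for a literal of $x_i$ gets weight $1+\tfrac1{p_i}-\tfrac1{p_{i\oplus_n 1}}+\sum_{j:\,l\in C_j}\bigl(\tfrac1{p_{n+j}}-\tfrac1{p_{n+j\oplus_m 1}}\bigr)$, i.e.\ each fractional contribution is a \emph{telescoping difference} over a cyclic ordering of the variable primes and, separately, of the clause primes. The fractional coefficient of $1/p_i$ in any selection is then $t_i-t_{i\ominus_n 1}$ (and $t_{n+j}-t_{n+j\ominus_m 1}$ for clauses), so Lemma~\ref{lem:uniqueness} applied against the integer target $W=n$ forces \emph{all} the $t_i$ to be equal to one another and all the $t_{n+j}$ to be equal to one another; the integer part $\sum_i t_i=n$ then pins each $t_i$ to exactly $1$, and the common value of the $t_{n+j}$ is precisely the ``same number of true literals per clause,'' with no need to enumerate $k\in\{0,1,2,3\}$. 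This cyclic coupling between consecutive variables is the ingredient your construction is missing, and without it (or some substitute that ties the per-variable multiplicities to each other) the reduction is unsound.
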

\begin{proof}
	For a given instance
	$A = \{w_1, \ldots, w_n\}$ and target weight $W$, 
	we know that the quantities $q_i$, for all $i = 1,\ldots,n$, have to satisfy $q_i \leq W/w_i$.
	This in fact shows that the problem is in \NP, because 
	all the quantities $q_i$ when represented in binary can be written down in polynomial space
    and can be guessed at the beginning. 
	We can then simply verify whether $\sum_{i=1}^n q_i \cdot a_i = W$ holds
	in polynomial time by adding the rational numbers inside this sum 
	one by one (while representing all the numerators and denominators in binary).
	
	To prove strong NP-hardness, we provide a reduction from the \sameSAT problem for \cnffour formulae (which is NP-complete due to Theorem \ref{thm:all-same}).
	Assume we are given a \cnffour formula
	\[\phi = C_1 \wedge C_2 \wedge \ldots \wedge C_m\]
	with $m$ clauses $C_1, \ldots, C_m$ and $n$ propositional variables $x_1, \ldots, x_n$,
	where $C_j = a_j \vee b_j \vee c_j$ for $j = 1,\ldots,m$ and
	each $a_j,b_j,c_j$ is a literal equal to $x_i$ or $\lnot x_i$
	for some $i$.
	For a literal $l$, we write that $l \in C_j$ iff $l$ is equal to $a_j, b_j$ or $c_j$.
	We will now construct a set of items $A$ of size
	polynomial in $n+m$ and a polynomial weight $W$ such that 
    $A$ with the total weight $W$ is a positive instance of \usubss
	iff $\phi$ is satisfiable.
    
	We first need to construct a list of $n+m$ different prime numbers 
	$(p_1,\ldots,p_{n+m})$ that are all larger than $n+5$.
	It suffices to pick $p_i = \pi_{i+n+5}$ for all $i$, 
	because clearly $\pi_j > j$ for all $j$.
	Thanks to Corollary \ref{cor:primes-complexity},
	we can list all these primes numbers in the unary notation
	in time and space polynomial in $n+m$.

	The set $A$ will contain one item per each literal.
	We set the weight of the item corresponding to the literal $x_i$
	to 
	\[1 + \nfrac{1}{p_i} - \nfrac{1}{p_{i \plone{n}}} + \sum_{\{j \mid x_i \in C_j \}} \left(\nfrac{1}{p_{n+j}} - \nfrac{1}{p_{n+j\plone{m}}}\right)\]
	and
	corresponding to the literal $\lnot x_i$ to
	\[1 + \nfrac{1}{p_i} - \nfrac{1}{p_{i\plone{n}}} + \sum_{\{j \mid \lnot x_i \in C_j \}} \left(\nfrac{1}{p_{n+j}} - \nfrac{1}{p_{n+j\plone{m}}}\right).\]

	Notice that each of these weights is $\geq 1 - \frac{5}{p_1} > 0$,
	because each literal occurs at most four times in $\phi$,
	and $p_1 > 5$ is the smallest prime number among $p_i$-s.
	At the same time, all of them are also $\leq 1 + \frac{5}{p_1} < 2$.
	Moreover, they can all be written in unary using polynomial space, 
	because each literal occurs in at most four clauses and 
	so this sum will have at most 11 terms in total.
	We can then combine all these terms into a single rational number.
	Its denominator will be at most equal to $p_{n+m}^{10}$,
	because $p_{n+m}$ is the largest prime number among $p_i$-s and
	the first of these terms is equal to 1.
	The numerator of this rational number has to be smaller than $2p_{n+m}^{10}$,
	because we already showed this number to be $< 2$.
	So both of them will have size $\calO((2n+m)^{10}\log^{10} (2n+m))$
	when written down in unary, because $p_{n+m} = \pi_{2n+m+5} < 2 (2n+m+5)\log (2n+m+5)$
	due to Theorem \ref{thm:prime-dist}.
	Set $A$ has $2n$ such items and so all its elements' weights can be written down in unary
	using $\calO(n(2n+m)^{10}\log^{10} (2n+m))$ space.
	
	Notice that the total weight of $A$ is equal to
	\[2n + \sum_{i=1}^n \left(\nfrac{2}{p_i} - \nfrac{2}{p_{i \plone{n}}}\right) + \sum_{j=1}^m \left(\nfrac{3}{p_{n+j}} - \nfrac{3}{p_{n+j\plone{m}}}\right)\]
	because there are $2n$ literals, each variable corresponds to two literals, and each clause contains exactly three literals.
	As both of the two sums in this expression are telescoping, 
	we get that the total weight is in fact equal to $2n$.
	We claim that the target weight $W = n$ is achievable by picking items from $A$ (each item possibly multiple times)
	iff $\phi$ is a positive instance of \sameSAT.
	
	\smallskip \noindent ($\Rightarrow$) 
	Let $q_i$ and $q'_i$ be the number of times an item corresponding to, respectively, literal $x_i$ and $\lnot x_i$ is chosen
	so that the total weight of all these items is $n$. 
	
	For $i=1,\ldots,n$, we define $t_i := q_i + q_i'$.
	For $j=1,\ldots,m$, we define $t_{n+j}$ to be the number of times an item corresponding to a literal in $C_j$ is chosen.
	For example, if $C_j = x_1 \vee \lnot x_2 \vee x_5$ then $t_{n+j} = q_1 + q'_2 + q_5$.
	Finally, let $T := \sum_{i=1}^n q_i + q'_i$ be the total number of items chosen.
	Notice that $T \leq W / (1 - \frac{5}{p_1}) < W / (1 - \frac{5}{n+5}) = n+5$.
	
	Now the total weight of the selected items can be expressed using $t_i$-s as follows:
	\begin{equation} \tag{$\star$}
	\label{eq:sum}
	\sum_{i=1}^n t_i + \sum_{i=1}^n \frac{t_i - t_{i \minone{n}}}{p_i} + \sum_{j=1}^m \frac{t_{n+j} - t_{n+j \minone{m}}}{p_{n+j}}
	\end{equation}
	Notice that $|t_i - t_{i \minone{n}}| < n+5$ and $p_i > n+5$ for all $i=1,\ldots,n$, and
	$|t_{n+j} - t_{n+j \minone{m}}| < n+5$ and $p_{n+j} > n+5$ for all $j=1,\ldots,m$.
	It now follows from Lemma \ref{lem:uniqueness} that
    (\ref{eq:sum}) can be equal to $n$ if and only if
	$\sum_{i=1}^n t_i = n$, and $t_1=t_2=\ldots=t_n$, and $t_{n+1} = t_{n+2} = \ldots = t_{n+m}$.
	The first two facts imply that for all $i = 1,\ldots,n$, we have $t_i = 1$ and so exactly one item 
	corresponding to either $x_i$ or $\lnot x_i$ is chosen. 
	The last fact states that in each clause exactly the same number of items corresponding to its literals is chosen.
	It is now easy to see that the \sameSAT condition is satisfied by $\phi$ for the valuation $\nu$ such that,
    for all $i \in \{1,\ldots,n\}$, we set
    $\nu(x_i) = \top$ iff $q_i =1$. 

	\smallskip \noindent ($\Leftarrow$) 
	Let $\nu$ be a valuation for which $\phi$ satisfies the \sameSAT condition. 
	We set the quantities $q_i$ and $q'_i$, the number of times an item corresponding to the literal $x_i$ and $\lnot x_i$ is picked, as follows.
	If $\nu(x_i) = \top$ then we set $q_i = 1$ and $q'_i = 0$.
	If $\nu(x_i) = \bot$ then we set $q_i = 0$ and $q'_i = 1$.
	
	Let us define $t_i$-s as before.
	Note that we now have $t_i = 1$ for all $i = 1,\ldots,n$ and 
	$t_{n+1} = t_{n+2} = \ldots = t_{n+m}$, because the \sameSAT condition is satisfied by $\nu$.
	We can now easily see from the expression (\ref{eq:sum}) that the total weight of the just picked items is equal to $n$.
\end{proof}

Although the strong NP-hardness complexity of the \partition problem 
does not follow from the statement of Theorem \ref{thm:usbss}, it follows from its proof as follows.

\begin{theorem}
	The \partition problem with rational weights is strongly NP-complete.
\end{theorem}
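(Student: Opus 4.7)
The plan is to reuse, essentially verbatim, the construction and the core analysis of the proof of Theorem~\ref{thm:usbss}. Given a \cnffour formula $\phi$, let $A$ be the set of $2n$ rational weights built in that proof. As shown there, the total weight of $A$ equals $2n$, so $A$ admits a partition into two equal-weight blocks if and only if some subset of $A$ has weight exactly $n$. The reduction would therefore map $\phi$ to the \partition instance $A$, and it remains to argue that such a subset exists if and only if $\phi$ is a positive instance of \sameSAT.

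For the forward direction, given a valuation $\nu$ that witnesses the \sameSAT condition for $\phi$, I would place into the first block the item associated with $x_i$ when $\nu(x_i)=\top$ and the item associated with $\lnot x_i$ when $\nu(x_i)=\bot$, and send all remaining items to the second block. The total weight of the first block is computed exactly as in the $(\Leftarrow)$ direction of Theorem~\ref{thm:usbss}: the quantities $t_i$ equal $1$ for every $i \leq n$ and are constant across $j=1,\ldots,m$, so the sum in (\ref{eq:sum}) collapses to $n$. The second block then has weight $2n-n = n$.

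For the reverse direction, any equal-weight partition of $A$ exhibits a subset $B \subseteq A$ of weight exactly $n$, which can be viewed as a \usubss solution whose quantities all lie in $\{0,1\}$. The $(\Rightarrow)$ argument of Theorem~\ref{thm:usbss} then applies without any modification, because it uses only the integrality of the quantities and the bound $T<n+5$, both of which hold a fortiori in the 0-1 case; this yields a valuation of $\phi$ satisfying the \sameSAT condition.

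Membership in \NP is routine: guess the bipartition and compare the two totals in binary rational arithmetic. The reduction is polynomial in $n+m$ thanks to Corollary~\ref{cor:primes-complexity}, matching the bound already established for Theorem~\ref{thm:usbss}. The only conceptual step requiring real care, and essentially the only one, is the observation that the construction of Theorem~\ref{thm:usbss} already forces $q_i + q_i' = 1$, so that its unbounded solutions at target weight $n$ automatically coincide with 0-1 selections; once this is noted, no genuinely new obstacle arises beyond rewording the existing argument in terms of blocks of a partition.
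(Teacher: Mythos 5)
Your proposal is correct and follows exactly the paper's own route: it reuses the reduction from Theorem~\ref{thm:usbss} unchanged, observing that the target weight $n$ is half the total weight $2n$ so the constructed instance doubles as a \partition instance, and that restricting the quantities to $\{0,1\}$ does not affect the argument. The paper states this in three sentences; your version merely spells out the same observations in more detail.
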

\begin{proof}
	Just repeat the proof of Theorem \ref{thm:usbss} without any change. In this case we know {\em a priori} that $q_i \in \{0,1\}$, which does not make any difference to the used reasoning. Notice that the target weight $W$ chosen in the reduction is exactly equal to half of the total weights of all the items in $A$,
    so the \usubss problem instance constructed can also be considered to be a \partition problem instance.
\end{proof}

Now, as the \subss problem is a generalisation of the \partition problem, we instantly get the following result.

\begin{corollary}
	The \subss problem with rational weights is strongly NP-complete.
\end{corollary}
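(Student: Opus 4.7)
The plan is to use the one-line textbook reduction from \partition to 0-1 \subss. Given a \partition instance $A = \{w_1, \ldots, w_n\}$, I would emit the 0-1 \subss instance whose item list is $A$ itself and whose target weight is $W = \tfrac{1}{2}\sum_{i=1}^n w_i$. By definition of \partition, the set $A$ admits an equal-weight split if and only if some subset $B \subseteq A$ sums exactly to $W$, so yes-instances map to yes-instances in both directions. Membership in \NP follows from the same guess-and-verify argument used for \usubss in Theorem \ref{thm:usbss}: guess a subset, sum its rational weights in polynomial time while keeping numerators and denominators in binary, and compare the result with $W$.

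For strong NP-hardness I would combine this reduction with the previous theorem, which gives strong NP-completeness of \partition with rational weights. The only nontrivial thing to check is that the reduction is polynomial-time when the input is encoded in unary. The item list is copied verbatim, so its unary size is preserved; the only potentially worrying quantity is $W$, because the sum $\tfrac{1}{2}\sum_i a_i/b_i$ could in principle have a denominator as large as $2\prod_i b_i$, which is exponential in the unary-encoded input size. The main obstacle is therefore to rule out this blow-up.

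Fortunately, the specific \partition instances produced by the reduction of Theorem \ref{thm:usbss} are very benign in this respect: the telescoping computation carried out in that proof shows that the total weight of $A$ is exactly $2n$, so for these instances $W = n$ is a natural number of linear unary size. Hence the reduction increases the unary size only by an additive $O(n)$, and the strong NP-completeness of 0-1 \subss with rational weights follows by composing it with the strong NP-hardness of \partition. I expect no further difficulty, as every remaining step is purely bookkeeping.
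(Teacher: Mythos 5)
Your proposal is correct and takes essentially the same route as the paper, which simply observes that 0-1 \subss generalises \partition (via exactly the target $W = \tfrac{1}{2}\sum_i w_i$ reduction you describe) and so inherits its strong NP-hardness. Your extra check that $W$ stays polynomially bounded in unary — because the hard \partition instances from the proof of Theorem \ref{thm:usbss} have total weight exactly $2n$, giving $W = n$ — is a detail the paper leaves implicit, and you verify it correctly.
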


Finally, we observe that the 0-1 \knapsack and \uknapsack problems are generalisations of the \subss and \usubss problems, respectively. 
To see this just restrict the weight and profit of each item to be equal to each other as well as require $V = W$.
Any such instance is a positive instance of 0-1 \knapsack (\uknapsack) if and only if it is a positive instance of \subss (respectively, \usubss).

\begin{corollary}
    The 0-1 \knapsack and \uknapsack problems with rational weights are strongly NP-complete.
\end{corollary}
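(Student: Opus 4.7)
The plan is to reduce from the two \subss variants, which have just been shown to be strongly NP-complete in Theorem \ref{thm:usbss} and the preceding Corollary, to their respective \knapsack counterparts via the trivial ``profit equals weight'' construction that the excerpt already hints at. Specifically, given a \subss (resp.\ \usubss) instance with weights $A = \{w_1,\ldots,w_n\}$ and target $W$, I would build a 0-1 \knapsack (resp.\ \uknapsack) instance by setting $v_i := w_i$ for every $i$, keeping the capacity equal to $W$, and putting the profit threshold $V := W$.

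Correctness of the reduction is essentially by inspection: any selection of quantities $q_i \in \{0,1\}$ (resp.\ $q_i \in \Nat$) satisfying both $\sum_i q_i w_i \leq W$ and $\sum_i q_i v_i \geq V$ must, since $v_i = w_i$ and $V = W$, satisfy $\sum_i q_i w_i = W$ exactly, which is precisely the \subss (resp.\ \usubss) condition; the converse direction is immediate. The key point for \emph{strong} NP-hardness is that the constructed instance literally reuses the rational numbers $w_1,\ldots,w_n,W$ (plus one extra copy of $W$ playing the role of $V$), so its unary size is bounded by twice the unary size of the original instance. Hence a hypothetical pseudo-polynomial algorithm for \knapsack would yield one for \subss, contradicting the strong NP-hardness of the latter.

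For membership in \NP I would mirror the argument used for \usubss in Theorem \ref{thm:usbss}. In the 0-1 case one simply guesses a bit vector indicating the chosen subset and verifies the two inequalities in polynomial time using binary rational arithmetic. In the unbounded case each quantity $q_i$ is bounded by $W/w_i$, so the tuple $(q_1,\ldots,q_n)$ can be guessed in binary within polynomial space and the two sums $\sum_i q_i w_i$ and $\sum_i q_i v_i$ computed and compared to $W$ and $V$ in polynomial time.

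I do not expect any real obstacle here; the whole argument is a two-line observation on top of the already-established strong NP-hardness of \subss and \usubss. The only subtlety worth flagging explicitly in the written proof is the size-preservation remark above, since it is precisely this preservation that upgrades the classical reduction to a strongly NP-hardness reduction in our rational setting.
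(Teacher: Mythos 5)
Your proposal is correct and matches the paper's own argument: the paper likewise observes that setting each item's profit equal to its weight and taking $V = W$ turns a \subss (resp.\ \usubss) instance into an equivalent 0-1 \knapsack (resp.\ \uknapsack) instance, with the unary size essentially unchanged. Your additional remarks on \NP membership and size preservation are sound elaborations of the same reduction.
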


\section{Approximability}
\label{sec:fptas}

In this section, we briefly discuss the counter-intuitive fact that 
the optimisation version of all the decision problems defined in Section \ref{sec:background} admit a
fully polynomial-time approximation scheme (FPTAS) even though we just showed them to be
strongly NP-complete. First, let us restate a well-known result concerning this.
\begin{corollary}[Corollary 8.6 in \cite{vazirani2013approximation}]
Let $\Pi$ be an NP-hard optimisation problem satisfying the restrictions of Theorem 8.5 in \cite{vazirani2013approximation} (first shown in \cite{garey1978strong}). 
If $\Pi$ is strongly NP-hard, then $\Pi$ does not admit an FPTAS, assuming \Ptime ̸$\neq$ \NP.
\end{corollary}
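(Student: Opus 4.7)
The plan is a proof by contradiction along the classical lines of Garey and Johnson. Suppose that $\Pi$ satisfies the restrictions of Theorem 8.5 of \cite{vazirani2013approximation}, is strongly NP-hard, and nevertheless admits an FPTAS; the goal is to derive $\Ptime = \NP$.

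First, I would unpack what the two hypotheses give us. Strong NP-hardness supplies a polynomial $q$ such that $\Pi$ remains NP-hard when restricted to the class $\mathcal{C}$ of instances $I$ in which every numerical value is at most $q(|I|)$. The restrictions of Theorem 8.5 state that the objective $f$ is integer-valued on feasible solutions and that $\OPT(I)$ is bounded by some polynomial $p(|I|, \max(I))$, where $\max(I)$ denotes the largest number appearing in $I$. Combining these two facts, on the restricted class $\mathcal{C}$ one has $\OPT(I)\le p(|I|, q(|I|)) =: P(|I|)$, a single-variable polynomial in $|I|$ alone.

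The main step is then to invoke the hypothetical FPTAS on an instance $I\in\mathcal{C}$ with the parameter $\epsilon := 1/(P(|I|)+1)$. For a maximization problem (the minimization case is symmetric) the scheme returns a feasible $S$ with $f(S)\ge(1-\epsilon)\OPT(I)$, so the additive gap $\OPT(I)-f(S)$ is at most $\epsilon\cdot P(|I|) < 1$; since both $f(S)$ and $\OPT(I)$ are integers, equality is forced and the scheme computes the optimum exactly. Because the FPTAS runs in time polynomial in $|I|$ and in $1/\epsilon$, and the chosen $1/\epsilon$ is itself polynomially bounded in $|I|$, this yields a polynomial-time exact algorithm for an NP-hard problem, producing the desired contradiction.

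The one delicate point worth highlighting is that the two hypotheses of Theorem 8.5 must interlock in exactly the right way: integrality of $f$ is what converts a $(1-\epsilon)$ multiplicative guarantee into exact optimality by a sub-integer gap argument, while the polynomial bound on $\OPT$ in terms of $\max(I)$ is precisely what keeps $1/\epsilon$ polynomial once strong NP-hardness is used to clamp $\max(I)$. Both conditions are essential, and, as the surrounding discussion in the paper emphasises, these are exactly the ingredients that fail for our rational-weight problems: their optimal values are genuinely rational rather than integer, so the sub-integer gap argument collapses and there is no conflict between strong NP-hardness and the existence of an FPTAS.
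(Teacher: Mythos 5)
Your proof is correct and is precisely the classical Garey--Johnson argument: strong NP-hardness clamps $\max(I)$ polynomially, the bound $\OPT(I)\le p(|I|,\max(I))$ then makes $1/\epsilon$ polynomial in $|I|$, and integrality of the objective converts the sub-unit additive gap into exact optimality, contradicting \Ptime $\neq$ \NP. The paper itself gives no proof of this statement --- it simply cites Corollary 8.6 of Vazirani, whose proof (via the intermediate step that an FPTAS yields a pseudo-polynomial algorithm) is the same argument you give, merely packaged in two stages instead of one.
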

The crucial assumption made in Theorem 8.5 of \cite{vazirani2013approximation} is that the objective function is integer valued,
which does not hold in our case, so there is no contradiction.

First, let us formally define the optimisation version of some of the decision problems studied.
The optimisation version of the 0-1 \knapsack problem with capacity $W$ 
asks for a subset of items with the maximum possible total profit and whose weight does not exceed $W$.
As for the \subss problem, its optimisation version asks for a subset of items whose total weight is maximal, but $\leq W$.
The optimisation version of the other decision problems from Section \ref{sec:background} can also be defined 
(see, e.g., \cite{books/daglib/0010031}).

Now, let us formally define what we mean by an approximation algorithm for these problems.
We say that an algorithm is a {\em constant factor approximation algorithm} with a {\em relative performance $\rho$}
iff, for any problem instance, $I$, the cost of the solution that it computes, $f(I)$, satisfies:
\begin{itemize}
	\item for a maximisation problem: $(1-\rho)\cdot \OPT(I) \leq f(I) \leq \OPT(I)$
	\item for a minimisation problem: $\OPT(I) \leq f(I) \leq (1+\rho)\OPT(I)$
\end{itemize}
where $\OPT(I)$ is the optimal cost for the problem instance $I$. 
We are particularly interested in polynomial-time approximation algorithms.
A polynomial-time approximation scheme (PTAS) is an algorithm that, for every $\rho > 0$,
runs in polynomial-time and has relative performance $\rho$.
Note that the running time of a PTAS may depend in an arbitrary way on $\rho$.
Therefore, one typically strives to find a fully polynomial-time approximation scheme (FPTAS), which 
is an algorithm that runs in polynomial-time in the size of the input and $1/\rho$.

We will focus here on defining an FPTAS for 0-1 \knapsack problem with rational coefficients.
An FPTAS for the other optimisation problems considered 
in this paper can be defined in essentially the same way
and thus their details are omitted.

\begin{theorem}
	The 0-1 \knapsack problem with rational coefficients admits an FPTAS.
\end{theorem}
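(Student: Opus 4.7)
The plan is to lift the classical Ibarra--Kim FPTAS for 0-1 \knapsack to the rational setting and verify that the standard analysis goes through. After preprocessing away any item with $w_i > W$ (such items cannot appear in a feasible solution), let $v_{\max}$ be the largest profit among the remaining items, so $\OPT \ge v_{\max}$. For a given $\varepsilon > 0$, I would set the scaling factor $K = \varepsilon v_{\max}/n$ and round each profit down to the integer $v'_i = \lfloor v_i / K \rfloor \in \{0, 1, \ldots, \lfloor n/\varepsilon \rfloor\}$. The scaled profits are now nonnegative integers whose total is at most $n^2/\varepsilon$.

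Next I would run the standard profit-indexed dynamic program. Let $f[i][v]$ denote the minimum total weight of a subset of the first $i$ items whose scaled profit equals exactly $v$, with $f[i][v] = +\infty$ if no such subset exists. The recurrence is $f[i][v] = \min\bigl(f[i{-}1][v],\ f[i{-}1][v - v'_i] + w_i\bigr)$, interpreting a negative index as $+\infty$, and we output the largest $v$ for which $f[n][v] \le W$, reconstructing the witnessing subset by the usual backtracking. The table has $O(n^3/\varepsilon)$ entries and each transition is one rational addition plus one comparison against $W$ (or against another table entry).

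The only step at which the rational setting requires new justification, and the main obstacle of the proof, is that these rational additions and comparisons can still be carried out in polynomial time. Every entry $f[i][v]$ is a sum of at most $n$ of the input weights, so, writing each input weight as $w_i = a_i/b_i$ in lowest terms, the denominator of $f[i][v]$ divides $\prod_{i=1}^n b_i$. The bit-length of this product is $\sum_i \log b_i$, which is bounded by the size of the input; hence every DP entry can be stored as a fraction of polynomially bounded bit-length, and each addition and comparison between two such fractions takes polynomial time in the input size.

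Finally, the approximation guarantee is the standard one. If $S^*$ is an optimal solution and $S$ is the subset returned by the DP, then $\sum_{i \in S} v'_i \ge \sum_{i \in S^*} v'_i \ge \sum_{i \in S^*}(v_i/K - 1)$, and therefore
\[\sum_{i \in S} v_i \;\ge\; K \sum_{i \in S} v'_i \;\ge\; \OPT - nK \;=\; \OPT - \varepsilon\, v_{\max} \;\ge\; (1-\varepsilon)\OPT.\]
Combined with the polynomial bound on the table size and on the cost of each rational operation, this produces an FPTAS for 0-1 \knapsack with rational coefficients. The same template, with profits-vs-weights swapped or with $q_i$ allowed to range over $\mathbb{N}$, yields FPTASs for the remaining optimisation problems from Section~\ref{sec:background}.
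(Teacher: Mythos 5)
Your proof is correct, but it takes a genuinely different route from the paper's. The paper treats the integer FPTAS as a black box: it multiplies all coefficients of the rational instance $I$ by the least common multiple $\alpha$ of the denominators, observes (via the size analysis at the end of Section~\ref{sec:background}) that in \emph{binary} representation this blow-up is only polynomial, runs any off-the-shelf integer FPTAS on the resulting instance $I'$, and notes that profits scale uniformly by $\alpha$, so the relative guarantee transfers back. You instead re-derive the Ibarra--Kim scheme directly on the rational instance: the profit-scaling step $v'_i = \lfloor v_i/K\rfloor$ with $K = \varepsilon v_{\max}/n$ produces small integers regardless of whether the $v_i$ are integers or rationals, and the weights are left rational inside the profit-indexed DP, where you correctly observe that every table entry is a subset sum of input weights whose reduced denominator divides $\prod_i b_i$ and hence has polynomially bounded bit-length. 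Both arguments are sound. Yours is more self-contained and makes explicit exactly where rationality could bite (the arithmetic on DP entries), which is pedagogically valuable; the paper's is shorter, is agnostic to which integer FPTAS is used, and extends verbatim to the other optimisation problems without re-examining each dynamic program. The only loose ends in your write-up are trivial edge cases (all items heavier than $W$, or $v_{\max}=0$, where $K$ is undefined and one should just return the empty set), which do not affect correctness.
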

\begin{proof}
	We claim that we can simply reuse here any FPTAS for the 0-1 \knapsack problem with integer coefficients.
	Let $I$ be a 0-1 \knapsack instance with rational coefficients.
	We turn $I$ into an instance with integer coefficients only, $I'$,
	by the usual trick of multiplying all the rational coefficients by the least common multiple
	of the denominators of all the rational coefficients in $I$.
	Let us denote this least common multiple by $\alpha$.    
	Assuming that all the coefficients in $I$ are represented in binary, 
	then when multiplying them by $\alpha$ (again in binary representation),
	their size, as argued at the end of Section \ref{sec:background}, will only increase polynomially.
	Therefore, the size of $I'$ is just polynomially larger than $I$. (We should not use the unary notation because then these numbers may grow exponentially.)

	Notice that $\alpha \cdot \OPT(I) = \OPT(I')$.
	In fact, the profit of any subset of items $A$ in $I'$, denoted by $\val'(A)$, is $\alpha$ times bigger
	than the profit of this set of items in $I$, denoted by $\val(A)$.
	Let us now run on $I'$ any FPTAS for 0-1 \knapsack problem with integer coefficients with relative performance $\rho$
	(e.g., \cite{ibarra1975fast}).
	This will return as a solution a subset of items, $B$, such that 
	$\val'(B) \geq (1-\rho) \OPT(I')$.
	This implies that $\val(B) \geq (1-\rho) \OPT(I)$ so the same subset of items $B$ has also
	the same relative performance $\rho$ on the original instance $I$.
\end{proof}

\section{Conclusions}
\label{sec:conclusions}

In this paper we studied how the computational complexity of the \partition, 0-1 \subss, \usubss, 0-1 \knapsack, \uknapsack problems 
changes when items' weights and profits can be any rational numbers.
We showed here, as opposed to the setting where all these values are integers, 
that all these problems are strongly NP-hard, which means that 
there does not exists a pseudo-polynomial algorithm for solving them unless \Ptime=\NP.
Nevertheless, we also showed that all these problem admit an FPTAS
just like in the integer setting.
Finally, we just want to point out that 
if we restrict ourselves to only 
rational weights and profits with a finite representation as decimal numerals,
then these problems are no longer strongly \NP-complete.
This is because we could then simply multiply all the input numbers 
by a sufficiently high power of 10 and get an instance, with
all integer coefficients, whose size is polynomial in the size of the original instance.

\medskip
\noindent {\bf Acknowledgements.\ }
We would like to thank the anonymous reviewers whose comments helped to improve this paper.
This work was partially supported by the EPSRC through grants EP/M027287/1 (Energy Efficient Control) and EP/P020909/1 (Solving Parity Games in Theory and Practice).

\bibliography{biblio}
\bibliographystyle{plain}
\end{document}